\newtheorem{theorem}{Theorem}
\newtheorem{lemma}[theorem]{Lemma}
\newtheorem{condition}{Condition}
\def\T{{ \top }}
\def\bZ{{Z}}
\def\bbeta{{\beta}}
\definecolor{vaiomod}{rgb}{0.53, 0.56, 0.72}
\title{\textbf{Event History Analysis of Dynamic Communication Networks}}
\author{Tony Sit$^{*}$, Zhiliang Ying$^{\dagger}$ and Yi Yu$^{\ddagger, }$\textsuperscript{\Letter}\\~\\
\small{$^*$\textit{Department of Statistics, The Chinese University of Hong Kong}} \\
\small{$^\dagger$\textit{Department of Statistics, Columbia University}} \\
\small{$^\ddagger$\textit{School of Mathematics, University of Bristol}} \\
\scriptsize{tonysit@sta.cuhk.edu.hk \quad zying@stat.columbia.edu  \quad y.yu@bristol.ac.uk}}
\date{}
\begin{document}
\maketitle 

\abstract

Statistical analysis on networks has received growing attention due to demand from various emerging applications. In dynamic networks, one of the key interests is to model the event history of time-stamped interactions amongst nodes. We propose to model dynamic directed communication networks via multivariate counting processes. A pseudo partial likelihood approach is exploited to capture the network dependence structure.  Asymptotic results of the resulting estimation are established.  Numerical results are performed to demonstrate effectiveness of our proposal.  

\textbf{Keywords}: Recurrent event; Survival analysis; Estimating equations; Marginal models; Multivariate counting processes;  Dynamic modelling; Directed network.

\section{Introduction}\label{sect:intro}
Social networks have been actively studied in statistics literature in recent years. The methodological, theoretical and computational  developments thereof have been motivated by and led to interesting applications. Examples include understanding organizational structures from a human resources management perspective (\citealp[e.g.][]{Hopp_Zenk-2012-IJHRM}), transport planning based on travel behaviour (\citealp[e.g.][]{Kowald_etal-2010-Procedia, Cheng_etal-2015}), fraud detection (\citealp[e.g.][]{Baesens_etal-2015}), to name a few. Readers may refer to \cite{KarrerNewman2010}, \cite{Kolaczyk-2009}, \cite{YangEtal2012}, \cite{YenEtal2017} amongst others, for comprehensive overviews and recent developments on the statistical aspect of network data modelling across various disciplines. 

To understand the underlying dynamics of a social network of interest, one may make use of the corresponding event history data which include  interactions amongst participants. Survival analysis provides natural and effective tools for analyzing such data. Counting process techniques are typically applied for handling both time-to-event as well as recurrent event observations. \cite{AndersenGill1982} extended the  \cite{Cox1972} model for recurrent event time data and established the large sample properties of the corresponding estimators. As an alternative, \cite{PepeCai1993} and \cite{LawlessEtal1997} proposed the use of the mean function specification. In order to model multiple event times, \cite{Wei_etal-1989-JASA} developed a marginal approach,  while \cite{LinEtal2000} established large sample theory based on empirical processes. Readers may also refer to \cite{Andersen_etal-1993} and \cite{Martinussen_Scheike-2006} for summaries of martingale-based approaches and treatments for time-varying covariates.

We shall be concerned with a social network in which pairwise communication activities between senders and recipients are recorded by counting processes.  By modelling communication history as recurrent event time data, our proposal can entertain a more flexible dependence structure  amongst sequential events for each communication pair. In addition, it also attempts to incorporate another level of dependence amongst sender-recipient pairs.  We borrow the idea of composite likelihood for capturing the associated but unspecified dependence structure. To be more specific, our use of pseudo partial  likelihood is justified by the fact that distances amongst actors in social networks may not be as straightforwardly quantified as in traditional temporal and/or spatial data where temporal and/or geographical distances are naturally defined.  We establish a network version of $m$-dependent central limit theorem using Stein's \citep{Stein1972, Stein1986} method.  A refined asymptotic tightness of stochastic processes result is derived by allowing $m$ to grow with the sample size at a suitable rate. Convergence results of our proposed estimators are presented in Section \ref{sect:theory}.
 
The rest of the paper is organized as follows: Section~\ref{sec-methodology} presents the proposed method, discusses the inference procedure, and states the corresponding asymptotic results.  Extensive numerical results are presented in Section~\ref{sec-experiments}.  We conclude this paper with discussions in Section~\ref{sec-discussion}. Appendix provides details of the technical proofs of the results introduced in the main text.

\section{Methodology}\label{sec-methodology}

\subsection{Notation and Model}\label{sec-model}

Let $\mathcal{S} = \{1, \ldots, n\}$ denote the set of actors in a network. For each communication pair $(i, j) \in \mathcal{S}^{\otimes 2}$, we have its corresponding covariates $Z_{ij}(t) \in \mathbb{R}^p$, $t \in \mathcal{T}$, its corresponding event times $T_{ij, 0}, \ldots, T_{ij, n_{ij}}$, where $T_{ij, 0} = 0$ and each $n_{ij}$ is a non-negative integer-valued random variable.  We also assume that there is no self communication, i.e. for the pairs $(i, i)$, $i \in \mathcal{S}$, $n_{ii} = 0$.  For notational simplicity, we use $\mathcal{S}^{\otimes 2} = \{1, \ldots, n\}^{\otimes 2} \setminus \{(i, i), \, i \in \mathcal{S}\}$.  In the sequel, for a vector $v$, we denote $v^{\otimes 0} = 1$, $v^{\otimes 1} = v$ and $v^{\otimes 2} = vv^{\T}$. Let $\|v\|_1$ and $\|v\|$ be the $\ell_1$- and $\ell_2$-norms of $v$, respectively. For a $p \times q$ matrix $A=(a_{ij})$, we adopt the notation of $\|A\|_{\infty} = \max_{i=1, \ldots, p; j =1, \ldots, q}|a_{ij}|$ and $\|A\|_1 = \max_{i=1, \ldots, p}\sum_{j=1}^q|a_{ij}|$. For any set $\mathcal{B}$, we denote its cardinality by $|\mathcal{B}|$.

 For a subject $i \in \mathcal{S}$, we use multivariate counting processes to record its communication activities. Specifically, we adopt the notation
	\[
		\mathcal{N}_i = \{N_i(t), t\in \mathcal{T}\} = \{(N_{ij}(t), \, j \neq i), \, t\in \mathcal{T}\},
	\]
	where the univariate counting process $\{N_{ij}(t), t\in \mathcal{T}\}$ encodes the communication activities from sender $i$ to recipient $j$.  The corresponding mean function is given by
	\begin{equation}\label{eq-model-3}
		E\bigl\{dN_{ij}(t) \mid \bZ_{ij}(t)\bigr\} = \exp\bigl\{\bbeta^{o \T}\bZ_{ij}(t)\bigr\}\lambda_{0}(t)\,dt,
	\end{equation}
	where the baseline rate function $\lambda_{0}(\cdot)$ is unknown, and the coefficients vector $\beta^o \in \mathbb{R}^p$ is the parameters of interest.  We denote $\mathcal{F}_t$ as the $\sigma$-field generated by $\{N_{i}(s),\, i \in \mathcal{S}, \, 0 \leq s \leq t\}$.   In this paper we assume $p$ to be fixed, and more discussions regarding the dimensionality of $p$ can be found in Section~\ref{sec-discussion}.
 	
Different from \cite{PerryWolfe2013}, which assumes that each action in a network follows the Cox model and is conditionally independent with other previous events given past history, our formulation adopts the idea studied in \cite{LinEtal2000} which does not require accurate specifications of the dependence of sequential events within each pair. There is a subtle difference between \eqref{eq-model-3} and the celebrated Cox proportional hazards model which essentially assumes
	\begin{equation}\label{eq-model-cox}
		E\{dN_{ij}(t) \mid \mathcal{F}_{t-}\} = E\{dN_{ij}(t) \mid \bZ_{ij}(t)\}
	\end{equation} 
	in addition to \eqref{eq-model-3}. The requirement specified in \eqref{eq-model-cox} implies that the covariates included can capture all the dependence between the future and past events.  This is a valid assumption when at most one event occurs; however, for general information communication processes, where multiple events may happen over a certain period of interest, it is challenging to capture all the dependency by a set of well-conditioned covariates. Our formulation does not require \eqref{eq-model-cox} in which case the associated baseline function $\lambda_0(t)$ can be more generally defined under which random-effect intensity model can also be covered. It also considers a robust inference on the regression coefficients regarding the static covariates.  If we define $\{{M}_{ij}(\bbeta, t), \, t\in \mathcal{T}\}$ as 
	\begin{equation}\label{eq-dm-decomp}
		M_{ij}(\beta, t) = N_{ij}(t) - \int_0^t \exp\{\bbeta^{\T}\bZ_{ij}(s)\}\lambda_0(s)\,ds = N_{ij}(t) - \Lambda_{ij}(\beta, t),  
	\end{equation}
	then, due to \eqref{eq-model-3}, each $\{{M}_{ij}(\beta^o, t), \, t\in \mathcal{T}\}$ is a mean zero process instead of a martingale because \eqref{eq-model-cox} is no longer assumed. 	
	
To establish our inference procedure and introduce our proposed estimator, we mimic the idea of composite likelihood \citep[see, e.g.][]{Lindsay1988, CoxReid2004, VarinVidoni} and consider a pairwise pseudo partial likelihood. Since the observations are possibly dependent, this formulation is particularly useful when the full likelihood is too complicated to be expressed or optimized.  The corresponding log pseudo partial likelihood can be defined as follows,
	\begin{equation*}
		{\ell}_n(\bbeta) = \sum_{i=1}^n \sum_{j\neq i} \int_0^T \left\{\bbeta^{\T}\bZ_{ij}(t) - \log\left[\sum_{k = 1}^n \sum_{l \neq k} \exp\bigl\{\bbeta^{\T}\bZ_{kl}(t)\bigr\}\right]\right\} \,dN_{ij}(t),
	\end{equation*}
	whose score function, which is concave in $\beta$, is given by
	\[
		{U}_n(\beta) = \sum_{i=1}^n \sum_{j \neq i}{U}_{ij}(\beta) = \sum_{i = 1}^n\sum_{j \neq i} \int_0^T \left\{\bZ_{ij}(t) - \bar{\bZ}_n(\beta, t)\right\}\,dN_{ij}(t),
	\]
	where
	\[
		\bar{\bZ}_n(\beta, t) = \frac{\sum_{k = 1}^n \sum_{l \neq k} \bZ_{kl}(t)\exp\bigl\{\bbeta^{\T}\bZ_{kl}(t)\bigr\}}{\sum_{k = 1}^n \sum_{l \neq k} \exp\bigl\{\bbeta^{\T}\bZ_{kl}(t)\bigr\}}.
	\]

As shown in \ref{sect:theory}, the score function $U_n(\beta^0)$, if suitably normalized, is asymptotically normal with mean zero. This makes ${U}_n(\bbeta) = 0$  a valid consistent estimation equation.  In the sequel, we define $\hat{\bbeta}$ as the solution that solves 
		\begin{equation}\label{eq-beta-hat}
				U_n(\hat{\bbeta}) = 0.			
		\end{equation}

\subsection{Theory}\label{sect:theory}

We first list the regularity assumptions. 
	
\begin{condition}\label{assumption-model}
	For any $i \in \mathcal{S}$,  there exists $\mathcal{J}_i \subset \mathcal{S}$, such that for any $j \in \mathcal{S} \setminus \mathcal{J}_i$, $\mathcal{N}_i$ and $\mathcal{N}_j$ are independent, and $\mathcal{Z}_i$ and $\mathcal{Z}_j$ are independent, where $\mathcal{Z}_{i} = \{Z_{il}(t), l \in \mathcal{S}, t \in \mathcal{T}\}$.  Assume for any $i \in \mathcal{S}$, it holds that
	\begin{equation}\label{eq-con1}
		|\mathcal{J}_{i}| \asymp m_n = o(n^{1/4}).
	\end{equation}
\end{condition}

\begin{condition}\label{assumption-regular}
	Assume for all $(i, j)\in \mathcal{S}^{\otimes 2}$, there exists an absolute constant $K > 0$ such that
		\[
			\|Z_{ij}(0)\|_1 + \int_0^T\,\|dZ_{ij}(t)\|_1 \leq K.
		\]
		Let 
		\[
			\mu_n(\beta^o, t) = \frac{E \bigl\{\sum_{i=1}^n\sum_{j\neq i}Z_{ij}(t)\exp(\beta^{o\T}Z_{ij}(t))\bigr\}}{E\bigl\{\sum_{i=1}^n\sum_{j\neq i}\exp(\beta^{o\T}Z_{ij}(t))\bigr\}},
		\]
		\[
			\Sigma_{1,n} = E\left[\frac{\sum_{i=1}^n\sum_{j\neq i}}{n(n-1)}\int_0^T \{Z_{ij}(t) - \mu_n(\beta^o, t)\}^{\otimes 2} \exp(\beta^{o\T}Z_{ij}(t))\lambda_0(t)\,dt\right],
		\]
		and
		\[
			\Sigma_{2, n}  = E\Bigg(\frac{1}{n(n-1)}\left[\sum_{i=1}^n\sum_{j\neq i}\int_0^T \left\{Z_{ij}(t) - \mu_{n}(\beta^o, t)\right\}\, dM_{ij}(\beta^o, t)\right]^{\otimes 2}	\Bigg), 
		\]
		satisfying that 
		\[
			0 < \limsup_{n\rightarrow\infty}\frac{\rho_{\text{max}}(\Sigma_{j,n})}{\rho_{\text{min}}(\Sigma_{j,n})} < \infty, \quad j = 1, 2,
		\]
		where $\rho_{\min}(A)$ and $\rho_{\max}(A)$ are the minimum and maximum eigenvalues of matrix $A$, respectively.
		
	In addition, we assume that there exists a non-random vector $\mu(t)$ and matrices $\Sigma_1$, $\Sigma_2$ such that 
			\[
			\sup_{t\in \mathcal{T}}\max\left\{\|\mu_n(\beta^o, t) - \mu(t)\|, \, \bigl\|\Sigma_{1, n} - \Sigma_1\bigr\|, \, \bigl\|\Sigma_{2, n} - \Sigma_2\bigr\| \right\}\stackrel{P}{\to} 0.
			\]
\end{condition}

Condition~\ref{assumption-model} is the key assumption in this paper and is reasonably mild.  It restricts the growing rate of $m_n$ as an $m$-dependent assumption.  We adopt this modified $m$-dependent assumption in our network formulation because there is no linear ordering nor a natural distance as one has in the time series context.  To ensure that this assumption is realistic in our applications concerned, we also allow the dependence number $m_n$ to diverge with $n$.   If we require all edges are independent, which is commonly assumed in the stochastic block model, then we have $m_n = 1$.  In fact, in order to show the asymptotic normality of the estimators in Theorem~\ref{thm1}, we need to show a finite-dimensional central limit theorem, which requires $\max_{i\in \mathcal{S}}|\mathcal{J}_i| \leq m_n = o(n^{1/4})$, and the tightness of relevant processes, which requires for any $i\in \mathcal{S}$, $|\mathcal{J}_i| \asymp m_n \leq O(n^{1/3})$.   

Condition~\ref{assumption-regular} requires specific covariance structure that guarantees its positive definiteness. We define $\Sigma_{1, n}$ and $\Sigma_{2, n}$ in this way such that we can apply the self-normalizing version of the central limit theorem, which works under mild assumptions on the dependence structure.  Condition~\ref{assumption-regular} also implies that the mean processes $\{\Lambda_{ij}(\beta^o, t), \, t\in\mathcal{T}\}$ are Lipschitz continuous.  This fact will be repeatedly used in the proof in the Appendix.

\begin{theorem}\label{thm1}
	Under Conditions~\ref{assumption-model} and \ref{assumption-regular}, $\hat{\beta}$ defined in \eqref{eq-beta-hat} and $\Sigma_{1, n}, \Sigma_{2, n}$ specified in Condition~\ref{assumption-regular}, we have $\hat{\beta}$ asymptotically normal with mean $\beta^o$ and covariance matrix $\Sigma_n = \Sigma^{-1}_{1, n}\Sigma_{2,n}\Sigma^{-1}_{1, n}$. In particular, for any $v \in \mathbb{R}^p$, 
	\begin{equation}\label{eq-thm1-result}
		\frac{{v}^{\T}(\hat{ {\beta}} -  {\beta}^o)}{\left(v^{\T} \Sigma_n^{-1}v\right)^{1/2}} \to \mathcal{N}(0, 1), 
	\end{equation}
	in distribution, as $n\to \infty$.  
\end{theorem}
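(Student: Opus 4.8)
The plan is to follow the classical estimating-equations route: linearize the score around $\beta^o$, identify the limiting distribution of the normalized score, show that the normalized Hessian converges to $\Sigma_{1,n}$, and assemble these by Slutsky's theorem into the sandwich covariance $\Sigma_{1,n}^{-1}\Sigma_{2,n}\Sigma_{1,n}^{-1}$. As a preliminary step I would establish consistency $\hat\beta\stackrel{P}{\to}\beta^o$; since $\ell_n$ is concave, it suffices to show that $U_n(\beta)/\{n(n-1)\}$ converges, uniformly on a neighborhood of $\beta^o$, to a deterministic limit whose unique zero is $\beta^o$, which follows from a uniform law of large numbers for the $m_n$-dependent array together with Condition~\ref{assumption-regular} (the covariate bound there supplies the required moment control). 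Given consistency, a mean-value expansion of \eqref{eq-beta-hat} gives, for some $\beta^*$ on the segment joining $\hat\beta$ and $\beta^o$,
\[
\sqrt{n(n-1)}\,(\hat\beta-\beta^o)=\left[-\frac{1}{n(n-1)}\frac{\partial U_n(\beta^*)}{\partial\beta}\right]^{-1}\frac{U_n(\beta^o)}{\sqrt{n(n-1)}},
\]
so the problem reduces to the two factors on the right-hand side.

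For the score factor, the first observation is that the compensator contribution integrates out exactly: because $\bar{Z}_n(\beta^o,t)$ is the $\exp(\beta^{o\top}Z_{kl}(t))$-weighted average of the $Z_{kl}(t)$, one has $\sum_{i,j}\{Z_{ij}(t)-\bar{Z}_n(\beta^o,t)\}\exp(\beta^{o\top}Z_{ij}(t))=0$ pointwise in $t$, so that after substituting $dN_{ij}=dM_{ij}(\beta^o,\cdot)+\exp(\beta^{o\top}Z_{ij})\lambda_0\,dt$ into $U_n$ the drift part disappears and
\[
U_n(\beta^o)=\sum_{i=1}^n\sum_{j\neq i}\int_0^T\{Z_{ij}(t)-\bar{Z}_n(\beta^o,t)\}\,dM_{ij}(\beta^o,t).
\]
I would then replace the random centering $\bar{Z}_n(\beta^o,t)$ by the deterministic $\mu_n(\beta^o,t)$, controlling the remainder through the uniform convergence $\sup_{t}\|\bar{Z}_n(\beta^o,t)-\mu_n(\beta^o,t)\|\stackrel{P}{\to}0$ (again a uniform law of large numbers plus Condition~\ref{assumption-regular}), so that $U_n(\beta^o)/\sqrt{n(n-1)}$ is asymptotically equivalent to $\{n(n-1)\}^{-1/2}\sum_{i,j}\int_0^T\{Z_{ij}(t)-\mu_n(\beta^o,t)\}\,dM_{ij}(\beta^o,t)$, whose variance is exactly $\Sigma_{2,n}$.

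The crux, and the main obstacle, is the central limit theorem for this last sum. Since \eqref{eq-model-cox} is not imposed, each $M_{ij}(\beta^o,\cdot)$ is only mean-zero and not a martingale, so Rebolledo's martingale CLT is unavailable. Instead I would exploit the local dependence of Condition~\ref{assumption-model}: the summands indexed by $(i,j)$ and $(k,l)$ are independent once their actor-neighborhoods $\mathcal{J}_i,\mathcal{J}_j,\mathcal{J}_k,\mathcal{J}_l$ are disjoint, yielding a dependency graph of degree $O(m_n)$. A finite-dimensional CLT for an arbitrary linear functional $v^\top\{\cdot\}$ then follows from Stein's method for locally dependent sums, whose Berry--Esseen bound is governed by the rate $m_n=o(n^{1/4})$ in \eqref{eq-con1}. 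To upgrade this to the uniform-in-$t$ replacement of $\bar{Z}_n$ by $\mu_n$ used above, I would prove asymptotic tightness of the relevant empirical processes in $t$, which needs only the weaker rate $m_n=O(n^{1/3})$; the Lipschitz continuity of the mean processes $\{\Lambda_{ij}(\beta^o,t)\}$ noted after Condition~\ref{assumption-regular} furnishes the equicontinuity required by the chaining argument.

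For the Hessian factor, differentiation gives $-\partial U_n(\beta)/\partial\beta=\sum_{i,j}\int_0^T V_n(\beta,t)\,dN_{ij}(t)$, where $V_n(\beta,t)$ is the $\exp(\beta^\top Z_{kl})$-weighted sample variance of the covariates. Evaluating at $\beta^*$ and using the consistency of $\hat\beta$ (hence of $\beta^*$) together with the uniform-in-$(\beta,t)$ law of large numbers, I would substitute $dN_{ij}=dM_{ij}+\exp(\beta^{o\top}Z_{ij})\lambda_0\,dt$, discard the asymptotically negligible $dM_{ij}$ part, and obtain $-\{n(n-1)\}^{-1}\partial U_n(\beta^*)/\partial\beta\stackrel{P}{\to}\Sigma_1$ (equivalently, this factor is asymptotically $\Sigma_{1,n}$ by Condition~\ref{assumption-regular}). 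Slutsky's theorem then delivers $\sqrt{n(n-1)}(\hat\beta-\beta^o)\stackrel{d}{\to}\mathcal{N}(0,\Sigma_{1,n}^{-1}\Sigma_{2,n}\Sigma_{1,n}^{-1})$, and projecting onto an arbitrary direction $v$ and self-normalizing by the corresponding quadratic form in $\Sigma_n$ yields the stated standard-normal limit \eqref{eq-thm1-result}.
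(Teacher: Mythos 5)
Your proposal is correct and follows essentially the same route as the paper's own proof: consistency via concavity plus a law of large numbers for the $m_n$-dependent array, exact cancellation of the drift in the score, recentering $\bar{Z}_n$ at $\mu_n$, Stein's method (Baldi--Rinott type bounds for dependency graphs of degree $O(m_n)$ with $m_n = o(n^{1/4})$) for the finite-dimensional CLT, a chaining/tightness argument exploiting the Lipschitz continuity of the compensators, and a mean-value expansion with Slutsky's theorem to assemble the sandwich covariance. No substantive differences from the paper's argument to report.
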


Theorem \ref{thm1} states that the convergence rate is related with $m_n$. It follows from \eqref{eq-proof-1} in the proof that the convergence rate of the variance $v^{\T}\Sigma_n^{-1}v, v \in \mathbb{R}^p$,  which is $O(m_n^{2}n^{-1/2})$.  If $m_n = 1$ as assumed in the independent edges cases or $m_n = O(1)$ as in \cite{SchweinbergerHandcock2015}, then the convergence rate is $n^{-1/2}$, which is the same as the standard situation and is optimal.  The convergence rate decreases as $m_n$ increases.  We would like to point out that \cite{SchweinbergerHandcock2015} also considered dependent network models, and the results developed therein are based on a general exponential random graph model in a Bayesian framework.  The key differences between our paper and theirs are summarized as follows: i) the random variable associated with each edge is a Bernoulli random variable representing the presence of the edge in \cite{SchweinbergerHandcock2015}, while in our paper, each edge has its own counting process; ii) although both papers allow $m$-dependence, in \cite{SchweinbergerHandcock2015} $m$ is assumed to be finite, while in our setting, $m$ is allowed to be of order $o(n^{1/4})$; and iii) since we are under a survival analysis framework, the dependence structure is assumed to be among the senders, while \cite{SchweinbergerHandcock2015} investigated exponential random graph models, and therefore the dependence lies among edges.  

It is also natural to estimate the mean function $\Lambda_{0}(t)$ for $i \in \mathcal{S}$ by the Aalen--Breslow-type estimator
    \begin{equation*}
		\hat{\Lambda}_{0}(t) = \int^t_0\frac{d\sum^n_{i=1}\sum_{i\neq j}N_{ij}(s)}{\sum^n_{i=1}\sum_{i\neq j}\exp\{\hat{\beta}^\T Z_{ij}(s)\}}, \quad t \in [0, T],
    \end{equation*}
	which can be shown to be consistently estimated, by proving $\hat{\beta}$ is almost surely consistent in {\bf Step 1} in the proof. This requires a strong law of large numbers for dependent random variables \citep[see e.g.][]{KorchevskyPetrov2010}. The result  can also be generalize further so that individual senders can have different baseline hazards.

The result \eqref{eq-thm1-result} has the same sandwich variance form which resembles those obtained via the composite likelihood inference \citep[e.g.][]{LinEtal2000}. It degenerates to the efficient estimator case when $\Sigma_{1, n} = \Sigma_{2, n}$, i.e. all the directed edges are assumed to be independent.

There are a few key ingredients in the proof of Theorem~\ref{thm1}.  Firstly, we exploit the Stein's method, which has been extended to dependent cases in \cite{BaldiRinott1989}, \cite{ChenEtal2010}, to name but a few.  We leave the proof in the Appendix.  It worths to point out that we have developed a new device of central limit theorem which is designed for the semiparametric setting, integrating chaining arguments.  Secondly, we extend the weak convergence proof in the independent case stated in \cite{LinEtal2000} to a dependent case.  Since we allow $m_n$ to diverge, this is not a straightforward extension.

\subsection{Variance estimators}

We have established the asymptotic of $\hat{\bbeta}$ in Theorem~\ref{thm1}, but it involves unknown population quantities $\Sigma_{1, n}$ and $\Sigma_{2, n}$.  Usual estimators are based on the assumption of independent observations or independent innovations in the time series context. To tackle the unknown dependence structure, we adopt the jackknife sandwich estimator proposed in the composite likelihood literature; see e.g. \cite{VarinEtal2011} in an attempt to provide reasonable estimation for $\Sigma_{1, n}$ and $\Sigma_{2, n}$.

Let $\widehat{\Sigma}_{1, n}$ and $\widehat{\Sigma}_{2, n}$ be the estimators of $\Sigma_{1, n}$ and $\Sigma_{2, n}$, respectively.  They are defined as follows:
	\[
	\widehat{\Sigma}_{1, n} =  \sum_{i=1}^n \sum_{j\neq i} \sum_{k = 1}^{n_{ij}} \left\{Z_{ij}(T_{ij, k}) - \bar{Z}\bigl(\hat{\bbeta}, T_{ij, k}\bigr)\right\}^{\otimes 2} \delta_{ij},
	\]
	and
	\begin{equation}\label{eq-est-variability}
	\widehat{\Sigma}_{2, n} = \frac{1}{n}\sum_{s=1}^n \left[\sum_{i \neq s} \sum_{j \neq i, s}\sum_{k=1}^{n_{ij}} \bigl\{Z_{ij}(T_{ij, k}) - \bar{Z}(\hat{\beta}^{(-s)}, T_{ij, k})\bigr\}^{\delta_{ij}} \right]^{\otimes 2},
	\end{equation}
	where $\delta_{ij} = \mathbbm{1}\{N_{ij}(T) > 0\}$, $\hat{\bbeta}^{(-s)}$ is the estimator to the estimating equation \eqref{eq-beta-hat} after deleting the $s$th node and its corresponding data from the observations. The proposed jackknife procedure offers one possible approach for variance estimation in Theorem \ref{thm1}. Although there is no single construction of consistent estimates for $\Sigma_{2,n}$ and hence the variance, unless there is additional network and time dependence structural assumption, as we shall demonstrate in Section \ref{sec-experiments}, the jackknife procedure performs decently in terms of achieving empirical coverage probabilities that are very close to the nominal values.

In fact, for a general composite likelihood inference problem, estimation of the variability matrix $J_o$ is often challenging. The essence of composite likelihood is to make use of the working independence and sandwich variance estimator to capture the dependency so that specific dependent structures need not be assumed; however, in practice, a good estimator of the variance is inevitably a function of the unknown structure.  One feasible quick remedy of this dilemma is the jackknife estimator of the variability matrix as defined in \eqref{eq-est-variability}.

Consequently, for $\alpha \in (0, 1)$, a $(1-\alpha)\times 100\%$ percent confidence region for the true regression parameter $\bbeta^o$ can, be expressed as 
	\[
	\bigl\{\beta: (\hat{\beta}-\beta^o)^\T\left(\widehat{\Sigma}_{1, n}\Sigma_{2,n}^{-1}\widehat{\Sigma}_{1, n}\right)^{-1}(\hat{\beta}-\beta^o)\leq \chi^2_{p,1-\alpha}\bigl\},
	\] 
	where $\chi^2_{p, 1-\alpha}$ is the corresponding quantile of a chi-square distribution with degrees of freedom $p$.

\section{Numerical analysis}\label{sec-experiments}
\subsection{Simulation}
A  simulation study is presented in this section. We generated recurrent events for each pair of nodes of size 150 in a connected network from the model
\begin{equation*}
   \lambda_{ij}(t; Z) = \eta_i\lambda_{0i}(t)\exp\{ {\bbeta}_0^\T Z_{ij}\},	
\end{equation*}
where $\lambda_{0i}(t) = \mathbbm{1}(i\leq n/2) + 1.2\mathbbm{1}(i>n/2)$ denotes the baseline hazard. The unobserved random variable $\eta$ that introduces heterogeneity to this random-effect intensity model was assumed to follow Gamma distribution with mean and variance equal 1 and 1/16, respectively. Denote $N = n(n-1)$. We considered two sets of time-invariant covariates. The first set was generated $\widetilde{ {Z}}^\T = ( {Z}_1^\T,  {Z}_2^\T,  {Z}_3^\T) = (Z_{ij})_{N\times 3}$ in the following way: $  Z_1^\T\T = ( {Z}_{11\cdot}^\T , \ldots,  {Z}_{1n\cdot}^\T )^\T  = (\mathbbm{1}( \Xi_1 \geq   0)^\T , \ldots, \mathbbm{1}( \Xi_n \geq   0)^\T )^\T$, where $\geq$ denotes the element-wise comparison, $ \Xi_j$ is an independent and identically distributed $(n-1)$-dimensional Gaussian random variable with mean $0$ and $ \Sigma^{(1)} = (\sigma^{(1)}_{ij}) = \mathbbm{1}(i=j) + \rho \mathbbm{1}(i\neq j)$ for $j = 1, \ldots, n$.  Covariate $ {Z}_2$ were generated from $\textsc{Unif}(0,1)$ and $  Z_3^\T = (Z_{31\cdot}^\T , \ldots, Z_{3n\cdot}^\T )^\T $ were independent and identically distributed as normally distributed with mean $  0$ and variance $ \Sigma^{(2)}$ with $(\sigma^{(2)}_{ij}) = I(i=j) + \rho I(|i-j|=1)$, which is a band matrix with bandwidth 1. 

To examine the robustness of our approach, we also considered another set of simulations in which the covariates are dependent amongst senders. The setting is exactly the same as the first set discussed previously except that the covariates $\widetilde{{Z}}$ are correlated. In this study, $  Z_1^\T = (Z_{11}, \ldots, Z_{1N}) = I( \Xi \geq   0)$, where $ \Xi$ is a $N$-dimensional Gaussian random variable with mean $  0$ and $ \Sigma^{(1)} = (\sigma^{(1)}_{ij}) = I(i=j) + \rho I(i\neq j)$; $Z_2$ were generated as an $N$-vector of independent and identically distributed as $\textsc{Unif}(0,1)$ random variables and $  Z_3^\T = (Z_{31}, \ldots, Z_{3N})$ were normally distributed with mean $  0$ and variance $ \Sigma^{(2)}$ with $(\sigma^{(2)}_{ij}) = I(i=j) + \rho I(|i-j|=1)$, which is a band matrix with bandwidth 1.

For the jackknife procedure, both odd-1-out and odd-2-out procedures were examined: The odd-2-out procedure randomly remove two nodes from the network for estimating the corresponding variances based on 150 random draws. The results of the simulation studies are summarized in Table \ref{tbl:simind} based on 250 iterations.

The estimates provided by the proposed method are virtually unbiased while the variance estimator also provides reasonably accurate estimation of the true variances of $\hat{\beta}$ upon which the confidence intervals constructed demonstrate empirical coverage probabilities that are close to their nominal values.

\begin{table}[]
\centering
\caption{Summary statistics for the simulation studies$^\dagger$}
\label{tbl:simind}\footnotesize
\begin{tabular}{cccccccccc}\hline\hline
	$\rho$  & Parameters    & Bias   & SE    & SEE(JK) & SEE(JK2) & SEE   & ECP(JK) & ECP(JK2) & ECP   \\[5pt]
	\multicolumn{10}{c}{Independent} \\
	0$\cdot$00 & $\beta_{10}$ & 0$\cdot$015  & 0$\cdot$016 & 0$\cdot$016   & 0$\cdot$016    & 0$\cdot$016 & 0$\cdot$952    & 0$\cdot$952   & 0$\cdot$952  \\
	& $\beta_{20}$ & -0$\cdot$005 & 0$\cdot$022 & 0$\cdot$024   & 0$\cdot$024    & 0$\cdot$025 & 0$\cdot$952    & 0$\cdot$952   & 0$\cdot$960 \\
	& $\beta_{30}$ & 0$\cdot$002  & 0$\cdot$007 & 0$\cdot$007   & 0$\cdot$007    & 0$\cdot$007 & 0$\cdot$920    & 0$\cdot$920   & 0$\cdot$936 \\
	0$\cdot$30 & $\beta_{10}$  & 0$\cdot$013  & 0$\cdot$019 & 0$\cdot$019   & 0$\cdot$019    & 0$\cdot$019 & 0$\cdot$962   & 0$\cdot$962    & 0$\cdot$872  \\
	& $\beta_{20}$  & -0$\cdot$004 & 0$\cdot$026 & 0$\cdot$025   & 0$\cdot$025    & 0$\cdot$025 & 0$\cdot$936   & 0$\cdot$936    & 0$\cdot$956 \\
	& $\beta_{30}$  & 0$\cdot$001  & 0$\cdot$008 & 0$\cdot$007   & 0$\cdot$007    & 0$\cdot$007 & 0$\cdot$952   & 0$\cdot$952    & 0$\cdot$900 \\
	0$\cdot$50 & $\beta_{10}$  & 0$\cdot$018  & 0$\cdot$033 & 0$\cdot$046   & 0$\cdot$046    & 0$\cdot$042 & 0$\cdot$973   & 0$\cdot$973    & 0$\cdot$856  \\
	& $\beta_{20}$  & -0$\cdot$003 & 0$\cdot$028 & 0$\cdot$025   & 0$\cdot$025    & 0$\cdot$026 & 0$\cdot$928   & 0$\cdot$928    & 0$\cdot$944 \\
	& $\beta_{30}$  & 0$\cdot$002  & 0$\cdot$008 & 0$\cdot$007   & 0$\cdot$007    & 0$\cdot$007 & 0$\cdot$924   & 0$\cdot$928    & 0$\cdot$944 \\
	\multicolumn{10}{c}{Dependent} \\
	0$\cdot$30 & $\beta_{10}$ & 0$\cdot$015  & 0$\cdot$019 & 0$\cdot$019   & 0$\cdot$019    & 0$\cdot$019 & 0$\cdot$945    & 0$\cdot$945   & 0$\cdot$945  \\
	& $\beta_{20}$ & -0$\cdot$006 & 0$\cdot$023 & 0$\cdot$024   & 0$\cdot$024    & 0$\cdot$025 & 0$\cdot$958    & 0$\cdot$958   & 0$\cdot$972 \\
	& $\beta_{30}$ & 0$\cdot$002  & 0$\cdot$007 & 0$\cdot$007   & 0$\cdot$007    & 0$\cdot$007 & 0$\cdot$944    & 0$\cdot$944   & 0$\cdot$944 \\
	0$\cdot$50 & $\beta_{10}$ & 0$\cdot$014  & 0$\cdot$026 & 0$\cdot$035   & 0$\cdot$035    & 0$\cdot$033 & 0$\cdot$940    & 0$\cdot$940   & 0$\cdot$933  \\
	& $\beta_{20}$ & -0$\cdot$002 & 0$\cdot$025 & 0$\cdot$025   & 0$\cdot$025    & 0$\cdot$025 & 0$\cdot$953    & 0$\cdot$953   & 0$\cdot$960 \\
	& $\beta_{30}$ & 0$\cdot$001  & 0$\cdot$008 & 0$\cdot$007   & 0$\cdot$007    & 0$\cdot$007 & 0$\cdot$940    & 0$\cdot$940   & 0$\cdot$947 \\
	\hline\hline
\end{tabular}
\caption*{\footnotesize  $^\dagger$Bias is the mean differences between parameter estimates and their corresponding true values, SE denotes the standard errors of the parameter estimates; SEE(JK2), SEE(JK), SEE are the means of the variance estimates under odd-two-out, odd-one-out jackknife and naive standard error estimates, respectively; the corresponding empirical coverage probabilities are denoted as ECP(JK2), ECP(JK) and ECP, respectively. The na\"ive standard error estimates are based on $\widehat{\Sigma}_{1, n}^{-1}$. }
\end{table}

\subsection{A real data analysis example}
In this section, we applied our proposed procedure to examine a celebrated set of network data, namely the Enron e-mail corpus. This data set is one of the largest e-mail communication datasets that include not only both the senders and the recipients information, but also the times at which individual emails were sent. The dataset was originally compiled by the Federal Energy Regulatory Commission (FERC) and was released in 2002 after the bankruptcy filed by the company. The email corpus contains the information about 156 employees amongst whom 21,635 messages were delivered between 13\textsuperscript{th} November 1998 and 21\textsuperscript{st} June  2002. Individual attributes including their department affiliations (Legal, Trading, or other), seniorities (junior or senior) and genders are also recorded. 

Similar to the study investigated in \cite{PerryWolfe2013}, we excluded \textit{en masse} messages so that messages with more than five recipients were removed. These messages make up an approximately 30\% of all the messages. Upon these data points, we applied model \eqref{eq-model-3} to analyse the data with transformed timestamps $\log T$. In particular, in order to examine whether or not there exists homophily in the network, \textit{i.e.} whether or not common traits shared between a pair of individuals in the network (in our case, department, seniority and gender) are statistical significant with respect to the prediction of their interactive behaviour, we included three indicator variables with values 1 if a pair shares the same department, seniority and gender, respectively. The corresponding fits and variance estimates are summarized in Table \ref{tbl:data}. From the summary, we can observe that the covariates that capture the effect of same department and seniority between two members in the network are statistically significant. Our results agree with those findings of \cite{PerryWolfe2013}.  In addition, we also observe that the standard errors estimated via jackknife methods are smaller than those obtained without using the jackknife correction. This can be due to the fact that a certain level of negative correlation between covariates is captured by the jackknife approach. Although in this particular example, the reduced standard error estimates do not affect our conclusion in terms of covariates' statistical significance, we emphasize here that for general applications, it could be rather challenging to fully capture the actual dependence of the network of interest via a selected set of covariates. The jackknife adjustment upon the sandwich estimator here provides a computationally feasible alternative for problems of this sort. Finally, it is worth mentioning that the computation time for a dataset of this scale using a standard desktop computer with a 3$\cdot$40-GHz processor and 16Gb of random access memory is merely around 100 seconds. 

\begin{table}[]
\centering
\caption{Summary statistics for the Enron email dataset$^\ddagger$}
\label{tbl:data}\footnotesize
\begin{tabular}{ccccc}\hline\hline
Parameters       & Estimate & SEE(JK) & SEE(JK2) & SEE \\
$\beta_{Dept}$   &  0$\cdot$858  & 0$\cdot$235  & 0$\cdot$235 & 0$\cdot$163\\
$\beta_{Junior}$ &  0$\cdot$254  & 0$\cdot$107  & 0$\cdot$106 & 0$\cdot$159\\
$\beta_{Gender}$ &  0$\cdot$171  & 0$\cdot$110  & 0$\cdot$110 & 0$\cdot$156\\\hline\hline\end{tabular}

\caption*{\footnotesize $^\ddagger$Estimate corresponds to the point estimate for the parameters; SEE(JK2), SEE(JK), SEE are the variance estimates under odd-two-out, odd-one-out jackknife and naive standard error estimates, respectively. The na\"ive standard error estimates are based on $\widehat{\Sigma}_{1,n}^{-1}$. }
\end{table}

\section{Discussion}\label{sec-discussion}
In conclusion, this paper studies dynamic and directed communication networks under a mild assumption on the dependence structure amongst edges in the network.  More specifically, upon the recurrent event survival analysis structure, we establish a network version $m$-dependent central limit theorem in which $m$ grows with the sample size $n$. The robust sandwich variance estimator as well as our jack-knife procedure are justified numerically.   This general framework and its asymptotic results are new and unique.   

We would like to highlight that in this paper, we assume the dimension of the regression coefficient to be fixed, i.e. $\beta^o \in \mathbb{R}^p$, and $p$ is fixed.  One could incorporate high-dimensional inference techniques developed \citep[e.g.][]{HuangEtal2013}.  It is beyond the scope of this paper and we would like to make the setting simple in order to make the idea clearer.

Indeed, if for some real data sets, where a reasonable distance is available for the hyper graph, one can also use the results in random fields to establish the limiting distribution of the estimators with efficient variance estimators.  We would leave this as future work when proper data sets with definitions of distance properly defined.

\section*{Appendix}

In the Appendix, we will first show the tightness based on our $m$-dependence assumption, and then proof the main results in Theorem~\ref{thm1}.

\begin{lemma}[Tightness]\label{lem-1}
	For $t \in (0, T]$, define
	\[
	\bar{M}(t) = \frac{\sum_{i=1}^n \sum_{j\neq i} M_{ij}(\beta^o, t)}{\sqrt{\mathrm{var}\bigl\{\sum_{i=1}^n \sum_{j\neq i} M_{ij}(\beta^o, T)\bigr\}}}.
	\]
	Under Conditions~\ref{assumption-model} and \ref{assumption-regular}, we have that for any $\epsilon, \eta > 0$, there exists $\delta > 0$, such that
	\[
	\limsup_{n\to \infty} \mathrm{pr}\left\{\sup_{|s-t| < \delta}\left|\bar{M}(t) - \bar{M}(s)\right| \geq \eta \right\} < \epsilon.
	\]
\end{lemma}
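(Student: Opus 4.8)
The plan is to deduce the asymptotic equicontinuity of $\bar M$ from a moment bound on its increments that is uniform in $n$, and then to invoke the classical moment criterion for tightness in $D[0, T]$. Write $S(t) = \sum_{i=1}^n \sum_{j \neq i} M_{ij}(\beta^o, t)$ and $\sigma_n^2 = \mathrm{var}\{S(T)\}$, so that $\bar M(t) = S(t)/\sigma_n$; note the decomposition $S(t) = \sum_{i,j} N_{ij}(t) - \sum_{i,j}\Lambda_{ij}(\beta^o, t)$, whose first term is nondecreasing in $t$ and whose second term is Lipschitz in $t$ by the consequence of Condition~\ref{assumption-regular} recorded after its statement. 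The quantity I would target is the two-point bound $E|\bar M(t) - \bar M(s)|^4 \leq C|t - s|^2$, holding for all $s, t \in (0, T]$ and all large $n$. From this, Cauchy--Schwarz yields the three-point inequality $E\{|\bar M(t) - \bar M(s)|^2 |\bar M(s) - \bar M(r)|^2\} \leq C|t - r|^2$ for $r \leq s \leq t$, and the standard moment criterion then gives the stated control on the modulus of continuity; the limit is continuous because every jump of $\bar M$ has size $\sigma_n^{-1} \to 0$.

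To reach the fourth-moment bound I would first collect the second-order estimates. Put $\delta_{ij} = M_{ij}(\beta^o, t) - M_{ij}(\beta^o, s)$; model \eqref{eq-model-3} gives $E\delta_{ij} = 0$, while the Lipschitz continuity of $\Lambda_{ij}$ and the bounded-variation bound of Condition~\ref{assumption-regular} give $\mathrm{var}(\delta_{ij}) = O(|t - s|)$ and, by the same token, $E|\delta_{ij}|^4 = O(|t - s|)$. By Condition~\ref{assumption-model}, $\mathrm{cov}(\delta_{ij}, \delta_{kl}) = 0$ unless the senders satisfy $k \in \mathcal{J}_i \cup \{i\}$, so there are only $O(n m_n)$ admissible sender pairs, each contributing $O(n^2)$ recipient combinations; Cauchy--Schwarz then yields $\mathrm{var}\{S(t) - S(s)\} = O(m_n n^3 |t - s|)$, of the same order as $\sigma_n^2$, and hence $E|\bar M(t) - \bar M(s)|^2 = O(|t - s|)$. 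After centering, the Lipschitz compensator term contributes only at order $|t - s|^2$ and is negligible at this stage.

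The main step, and what I expect to be the principal obstacle, is upgrading this to the fourth moment. I would expand $E|S(t) - S(s)|^4 = \sum E\{\delta_{i_1 j_1}\delta_{i_2 j_2}\delta_{i_3 j_3}\delta_{i_4 j_4}\}$ and classify the summands by the dependence graph that Condition~\ref{assumption-model} places on the four senders $i_1, \ldots, i_4$: because each $\delta_{ij}$ is mean zero, a term can survive only when this graph has no isolated vertex. The configurations that split into two mutually independent covariance pairs reproduce $\{\mathrm{var}(S(t) - S(s))\}^2 = O(m_n^2 n^6 |t - s|^2)$, which matches $\sigma_n^4$ and is exactly the target order. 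The delicate part is to show that the genuinely connected configurations, in which up to four distinct but mutually $m$-close senders are chained together, remain subdominant: counting connected sender clusters introduces additional powers of $m_n$, and it is precisely the tightness-level requirement $|\mathcal{J}_i| \asymp m_n = O(n^{1/3})$ that keeps these contributions at or below order $m_n^2 n^6 |t - s|^2$. This is where the weaker rate $n^{1/3}$ suffices for tightness, in contrast to the $o(n^{1/4})$ needed for the finite-dimensional central limit theorem via Stein's method.

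Finally I would assemble the pieces: the fourth-moment bound, the Cauchy--Schwarz passage to the three-point condition, and the tightness criterion, to obtain $\limsup_n \mathrm{pr}\{\sup_{|s - t| < \delta}|\bar M(t) - \bar M(s)| \geq \eta\} < \epsilon$ for $\delta$ small. Unlike the independent-subject setting of \cite{LinEtal2000}, where the summands factor into independent contributions and a Donsker-type entropy argument closes the tightness immediately, here the moment inequality must be carried out directly for the $m$-dependent array, and the admissible growth of $m_n$ is dictated precisely by keeping the connected fourth-order clusters from dominating the squared-variance term.
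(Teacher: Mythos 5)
Your overall strategy---a fourth-moment bound on increments obtained by classifying the summands according to the dependence graph that Condition~\ref{assumption-model} induces on senders, followed by a moment criterion for tightness---is the same core idea as the paper's proof, which likewise establishes $E|W_n(t_k)-W_n(t_{k-1})|^4 = O(|t_k-t_{k-1}|^2)$ by the cluster count behind \eqref{eq-sigma-lemma} and you correctly locate where the weaker rate $m_n = O(n^{1/3})$ (as opposed to $o(n^{1/4})$ for the CLT) enters. However, there is a genuine gap in the step that converts the moment bound into tightness. The two-point bound $E|\bar M(t)-\bar M(s)|^4 \le C|t-s|^2$ cannot hold uniformly in $s,t$ for large $n$: the diagonal terms of your own expansion, $\sum_{i,j} E|\delta_{ij}|^4 = O(n^2|t-s|)$, contribute a term of order $|t-s|/\sigma_n^2$ after normalization (with $\sigma_n^2\asymp n^2$ under Condition~\ref{assumption-regular}), and $|t-s|/n^2 \gg |t-s|^2$ once $|t-s| \ll n^{-2}$. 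This is exactly the compensated-Poisson phenomenon, $E|X(t)-X(s)|^4 = 3\lambda^2(t-s)^2+\lambda(t-s)$; for such jump processes the three-point Billingsley condition holds but the two-point condition does not, and it is recovered from the product structure of disjoint increments, \emph{not} from Cauchy--Schwarz. Your proposed passage ``two-point bound $\Rightarrow$ (Cauchy--Schwarz) $\Rightarrow$ three-point bound'' therefore breaks down at small time scales: Cauchy--Schwarz only yields $\sqrt{|t-s|\,|s-r|}/n^2$, which is not $O(|t-r|^2)$ there.

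The paper's proof contains the extra device needed to close this hole, and you would need something equivalent. It chains only over dyadic grids $A_k(\delta)$ down to a finite resolution $K$ chosen so that $2^K \gtrsim n(nm_n^3+n^2)^{-1/4}$, applies the fourth-moment bound only to grid increments (where $|t-s|$ is bounded below and the quadratic term dominates), and controls the sub-grid oscillation deterministically: since $N_{ij}$ is nondecreasing and $\Lambda_{ij}(\beta^o,\cdot)$ is Lipschitz by Condition~\ref{assumption-regular}, the fluctuation of $W_n$ between consecutive finest-grid points is sandwiched by the grid increment of $W_n$ plus twice the (deterministic, $O(n\delta 2^{-K})$) compensator increment. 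Without this monotonicity step---or an alternative direct verification of the three-point condition exploiting near-independence of disjoint increments---the argument does not go through. A secondary point worth checking in your accounting: the claim that the paired configurations give $\{\mathrm{var}(S(t)-S(s))\}^2 = O(m_n^2 n^6 |t-s|^2)$ ``matching $\sigma_n^4$'' is inconsistent with $\sigma_n^2 \asymp n^2$; the crude $O(m_n n^3|t-s|)$ covariance count is an upper bound that is not of the order of the normalizer, so you must either sharpen it or carry the discrepancy explicitly through the final inequality.
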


\begin{proof}

Without loss of generality, let $0 < s \leq t \leq T$.  Define by convention that $0/0 = 0$.  For notational simplicity, in this proof, for any $i = 1, \ldots, n$, let $M_i(t) = \sum_{j \neq i} M_{ij}(\beta^o, t)$.
	
It follows from the Cauchy--Schwarz inequality that 
	\begin{align*}
		\sigma(t) & = \sqrt{\mathrm{var}\left\{\sum_{i=1}^n M_{i}(t)\right\}} \leq \left(E\left[\left\{\sum_{i=1}^n M_{i}(t)\right\}^4\right]\right)^{1/4} \\
		& = \left(\sum_{i, j, k, l = 1}^n E\bigl(M_i(t)M_j(t)M_k(t)M_l(t)\bigr) \right)^{1/4}.
	\end{align*}
	
Among all possible $(i, j, k, l) \in \{1, \ldots, n\}^{\otimes 4}$, it follows from Condition~\ref{assumption-model} and the proof of Corollary~2 in \cite{BaldiRinott1989} that there are at most $O(nm^3_n)$ terms consisting of 4 distinct elements having non-zero means, at most $O(nm^2_n)$ terms consisting of 3 distinct elements having non-zero means, $O(n^2)$ terms consisting of 2 distinct elements having non-zero means and $O(n)$ terms consisting of 1 distinct element having non-zero means.  Therefore, due to Conditions~\ref{assumption-model} and \ref{assumption-regular}, it holds that, for $t \in (0, T]$,
	\begin{equation}\label{eq-sigma-lemma}
		 \sigma(t) = O \bigl\{\bigl(nm_n^3 + n^2\bigr)^{1/4}\bigr\}.
	\end{equation}

Since the denominator of $\bar{M}(t)$ is also a function of $t$, we first decompose the difference as follows,
	\begin{align}\label{eq-barM-st}
		\bigl|\bar{M}(t) - \bar{M}(s)\bigr| \leq  \left|\frac{\sum_{i=1}^n \bigl(M_{i}(t) - M_{i}(s)\bigr)}{\sigma(T)}\right|.
	\end{align}

It follows from \eqref{eq-sigma-lemma} that up to an absolute constant, it suffices to show that
	\[
		\mathrm{pr}\left\{\sup_{0 < t - s < \delta}\left|W_n(t) - W_n(s)\right| > \eta/2 \right\} < \epsilon,
	\]
	where 
	\[
		W_n(t) = \frac{1}{\bigl(nm_n^3 + n^2\bigr)^{1/4}}\sum_{i=1}^n M_i(t).
	\]
	For any $k \in \{0, 1, 2, \ldots, \}$, define sets 
	\[
	A_k(\delta) = \{i\delta/2^k,\, i = 0, 1, 2, \lfloor 2^kT/\delta\rfloor\} \cup \{T\},
	\]
	and $t_k = \min \bigl\{u \in A_k(\delta), \, u \geq t\bigr\}$, $s_k = \max \bigl\{u \in A_k(\delta), \, u \leq s\bigr\}$.

For a large enough $K \geq 2$ satisfying $2^K \geq n\bigl(nm_n^3 + n^2\bigr)^{-1/4}$ and $\eta' = \eta/2$, we have
	\begin{align*}
	& \mathrm{pr}\left\{\sup_{t-s < \delta} \bigl|W_n(s) - W_n(t)\bigr| \geq \eta'\right\} \\	
	\leq & \mathrm{pr}\left\{\sup_{t-s < \delta} \bigl(\bigl|W_n(s) - W_n(s_K)\bigr| + \bigl|W_n(s_K) - W_n(t_K)\bigr| + \bigl|W_n(t) - W_n(t_K)\bigr|\bigr) \geq \eta'\right\} \\
	\leq & \mathrm{pr}\left\{\sup_{0 \leq t_K - t \leq \delta/2^K}\max_{t_K \in A_K(\delta)} \bigl|W_n(t) - W_n(t_K)\bigr| \geq \eta'/4\right\}  \\
	 + & \mathrm{pr}\left\{\sup_{0 \leq s - s_K \leq \delta/2^K}\max_{s_K \in A_K(\delta)} \bigl|W_n(s) - W_n(s_K)\bigr| \geq \eta'/4\right\} \\
	 + & \mathrm{pr}\left\{\max_{s_K \in A_K(\delta), 0 \leq  t_K - s_K \leq 2\delta }|W_n(s_K) - W_n(t_K)| \geq \eta'/2\right\} = (I) + (II) + (III).
	\end{align*}

Note that (I) and (II) can be dealt with using the same arguments, therefore we will only deal with terms (I) and (III).
	
For (I), define $t'_K = \max\bigl\{u \in A_K(\delta), \, u < t_K\bigr\}$.  We have $t'_K \leq t \leq t_K$ and $t_K - t'_K = \delta/2^K$.  Therefore, by the monotonicity of $N(\cdot)$ and $\Lambda(\cdot)$, one can write
	\begin{align*}
	&	W_n(t_K) - W_n(t) \leq \bigl(nm_n^3 + n^2\bigr)^{-1/4}\sum_{i=1}^n\bigl\{N_i(t_{K}) - N_i(t'_{K}) + \Lambda_i(t_{K}) - \Lambda_i(t'_{K})\bigr\} \\
	& = \bigl(nm_n^3 + n^2\bigr)^{-1/4}\sum_{i=1}^n\bigl\{\bigl(N_i(t_{K}) - \Lambda_i(t_{K})\bigr) - \bigl(N_i(t'_{K}) - \Lambda_i(t'_{K})\bigr) + 2\bigl(\Lambda_i(t_{K}) - \Lambda_i(t'_{K})\bigr)\bigr\} \\
	& = W_n(t_K) - W_n(t'_{K}) + 2\bigl(nm_n^3 + n^2\bigr)^{-1/4}\sum_{i=1}^n\bigl(\Lambda_i(t'_{K}) - \Lambda_i(t_{K})\bigr).
	\end{align*}

It follows from Assumption~\ref{assumption-regular} that there exists a small enough $\delta_2 > 0$ such that for any $0 < \delta < \delta_2$
	\[
	2\max_{t_K \in A_K(\delta)}\bigl(nm_n^3 + n^2\bigr)^{-1/4}\sum_{i=1}^n\bigl(\Lambda_i(t'_{K}) - \Lambda_i(t_{K})\bigr) \leq 2Cn\bigl(nm_n^3 + n^2\bigr)^{-1/4}\delta/2^K \leq \eta/8.
	\]
	Then,
	\begin{align*}
	(I) \leq \frac{T2^K}{\delta} \frac{256}{\eta^4} E\left\{|W_n(t_K) - W_n(t'_K)|^4\right\} \leq \frac{2^K C^2\times 256 \bigl(nm^3 + n^2\bigr) \delta^2}{\delta \eta^4 \bigl(nm^3 + n^2\bigr) \times 4^K}  = \frac{256C^2\delta}{2^K\eta^4} \leq \epsilon/4,
	\end{align*}
	where the second inequality follows the same arguments leading to \eqref{eq-sigma-lemma}.

It suffices to show $(III) \leq \epsilon/2$.  Due to our construction of $A_k(\delta)$ for $k = 0, 1, 2, \ldots$, we have $|t_{k+1} - t_k| \leq 2^{-(k+1)} \delta$, therefore for $K$ used above, it holds that
	\[
	|t_K - t_0| \leq \delta\sum_{k = 0}^{K} 2^{-k} \leq 2\delta.
	\]
	Same arguments lead to $|s_K - s_0| \leq 2\delta$.  Since we chose $K$ in the way that $|s_K - t_K| \leq 2\delta$, it holds that $|s_0 - t_0| \leq 6\delta$.  Then,
	\begin{align*}
	|W_n(s_K) - W_n(t_K)| & \leq \sum_{k = 1}^K |W_n(t_k) - W_n(t_{k-1})| + \sum_{k = 1}^K |W_n(s_k) - W_n(s_{k-1})| \\
	& + |W_n(t_0) - W_n(s_0)|  = (III.1) + (III.2) + (III.3).
	\end{align*}
	It suffices to bound the probabilities of (III.1) and (III.3) being large.
	
As for (III.3), note that
\[
\mathrm{pr}\left\{\max_{\stackrel{t_0, s_0 \in A_0(\delta)}{|t_0 - s_0| \leq 6\delta}}|W_n(t_0) - W_n(s_0)| \geq \eta/4\right\} \leq \frac{C^2\bigl(nm_n^3 + n^2\bigr) \times 256\delta^2}{\delta \bigl(nm_n^3 + n^2\bigr)\eta^4} = \frac{256C^2\delta}{\eta^4} \leq \epsilon/4.
\]	

As for (III.1), note that
\begin{align*}
& \mathrm{pr}\left\{ \sum_{k=1}^K\max_{t_k \in A_k(\delta)} |W_n(t_k) - W_n(t_{k-1})| \geq \eta/8 \right\}	\leq \frac{8}{\eta}E\left|\sum_{k=1}^K\max_{t_k \in A_k(\delta)}  |W_n(t_k) - W_n(t_{k-1})|\right| \\
\leq &   \frac{8}{\eta}\sum_{k=1}^K \left\{|A_k(\delta)|\max_{t_k \in A_k(\delta)}  E\left( |W_n(t_k) - W_n(t_{k-1})|^4\right)\right\}^{1/4} \\
\leq & \frac{8}{\eta} \sum_{k=1}^K \left\{\frac{2^k}{\delta} \frac{C^2\delta^2}{4^k}\right\}^{1/4} \leq \frac{8C^{1/2}}{\eta}\delta^{1/2} \leq \epsilon/4.
\end{align*}

The final conclusion holds by combining all the terms above.	

\end{proof}

\begin{proof}[of Theorem~\ref{thm1}]
This proof consists of three steps.  We first show the consistency of $\hat{\beta}$, then the asymptotic normality of $U_n(\beta^o)$, and lastly to control the residuals involved.  To begin, we define some additional notation.  Let 
	\[
		S^{(k)}(\beta, t) = \sum_{i = 1}^n \sum_{j\neq i} \big\{Z_{ij}^{\otimes k}(t) w_{ij}(\beta, t)\big\},
	\]
	where $k = 0, 1, 2$ and $w_{ij}(\beta, t) = \exp(\beta^{\T}Z_{ij}(t))$.  

\vskip 3mm
\noindent \textbf{Step 1: Consistency.}  In order to prove the consistency of $\hat{\beta}$, we recall that $\hat{\beta}$ is the root of equation ${U}_n\bigl(\hat{\beta}\bigr) = 0$.  Define
	\begin{align*}
	X(\beta) & = \frac{1}{n(n-1)}\bigl(\ell_n(\beta) - \ell_n(\beta^o)\bigr) \\
	& = \frac{1}{n(n-1)}\sum_{i=1}^n \sum_{j \neq i} \int_0^T \left\{(\beta - \beta^o)^{\T}Z_{ij}(t) - \log \left[\frac{S^{(0)}(\beta, t)}{S^{(0)}(\beta^o, t)}\right]\right\}\, dN_{ij}(t) \\
	& = \frac{1}{n(n-1)}\sum_{i=1}^n \sum_{j \neq i} \int_0^T X_{ij}(\beta, t) \, dN_{ij}(t).
	\end{align*}
	With probability tending to 1, its unique maximum is attained at $\hat{\beta}$ due to the definition of $\hat{\beta}$.  Moreover, let 
	\begin{align*}
	A(\beta) = \frac{1}{n(n-1)}\sum_{i=1}^n \sum_{j \neq i} \int_0^T X_{ij}(\beta, t) w_{ij}(\beta^o, t)\lambda_0(t)\,dt.
	\end{align*}
	Since $M_{ij}(\beta^o, t) = N_{ij}(t) - \int_0^t w_{ij}(\beta^o, s)\lambda_0(s)\,ds$ is mean zero, we have that $X(\beta) - A(\beta)$ is mean zero.  For any fixed $\beta$, it holds that
	\begin{align}
	& \mathrm{var}\bigl\{X(\beta) - A(\beta)\bigr\} = \frac{1}{n^2(n-1)^2}\mathrm{var}\left\{\sum_{i=1}^n\sum_{j \neq i} \int_0^T X_{ij}(\beta, t) \, dM_{ij}(\beta^o, t) \right\} \nonumber \\
	= & \frac{1}{n(n-1)}E\left[\left\{\int_0^T X_{ij}(\beta, t)\, dM_{ij}(\beta^o, t)\right\}\left\{\sum_{(k, l) \in \mathcal{J}_{ij}} \int_0^T X_{kl}(\beta, t)\, dM_{kl}(\beta^o, t)\right\}\right]\nonumber\\
	= & O(n^{-1}(n-1)^{-1}m_n). \label{eq-proof-thm1-1}
	\end{align}
	due to Conditions~\ref{assumption-model} and \ref{assumption-regular}.  Recall that $m_n = o(n^{1/4})$, we have $\mathrm{var}\bigl\{X(\beta) - A(\beta)\bigr\} = o(1)$.  It therefore follows from Markov inequality, $X(\beta)$ converges in probability to the same limit as $A(\beta)$, the unique maximum of which is at $\beta^o$.  Therefore, we have $\hat{\beta}$ converges to $\beta^o$ in probability. 
	
\vskip 3mm
\noindent \textbf{Step 2: Asymptotic normality.}	
We are to derive the asymptotic normality of 
	\begin{eqnarray}
	U_n(\beta^o)   & = & \sum_{i=1}^n\sum_{j\neq i}U_{ij}(\beta^o) =  \sum_{i=1}^n \sum_{j\neq i}\int_0^T \left\{Z_{ij}(t) - \bar{Z}_n(\beta^o, t)\right\}\, dN_{ij}(t) \nonumber\\
	& = & \sum_{i=1}^n \sum_{j\neq i}\int_0^T \left\{Z_{ij}(t) - \bar{Z}_n(\beta^o, t)\right\}\, dM_{ij}(\beta^o, t) \nonumber\\
	& = & \sum_{i=1}^n \sum_{j\neq i}\int_0^T \{Z_{ij}(t) \nonumber \\
	&&- \mu(\beta^o, t)\} \, dM_{ij}(\beta^o, t) - \sum_{i=1}^n \sum_{j\neq i}\int_0^T \{\mu(\beta^o, t) - \bar{Z}_n(\beta^o, t)\}\, dM_{ij}(\beta^o, t). \nonumber
	\end{eqnarray}

Let 
	\begin{equation}\label{eq-mbarvt}
		\bar{M}^v(t) = \frac{v^{\T}\sum_{i=1}^n \sum_{j\neq i} M_{ij}(\beta^o, t)}{\sqrt{\mathrm{var}\left(v^{\T}\sum_{i=1}^n \sum_{j\neq i} M_{ij}(\beta^o, T)v\right)}},
	\end{equation}

For any fixed $v \in \mathbb{R}^p$ and $t\in \mathcal{T}$, let
		\[
		\bar{M}^v_Z(t) = \frac{v^{\T}\sum_{i=1}^n \sum_{j\neq i} \int_{0}^t Z_{ij}(s)\, dM_{ij}(\beta^o, s)}{\sqrt{\mathrm{var}\left(v^{\T}\sum_{j\neq i} \int_{0}^{T} Z_{ij}(s)\, dM_{ij}(\beta^o, s)v\right)}}.
		\]

It follows from Section~A.2 in \cite{LinEtal2000}, it suffices to show the weak convergence of $(\bar{M}^v, \bar{M}^v_Z)$, which means we are to show (i) the finite dimensional distributions converges, and (ii) tightness, see e.g. Theorem~10.2 in \cite{Pollard-1990} or Theorem~1.5.4 of \cite{VaartWellner1996}.

As for (i), it follows from Corollary~2 in \cite{BaldiRinott1989} (see also \citealp{ChenEtal2010}), that we have $(\bar{M}^v, \bar{M}^v_Z)$ converges in finite dimensional distributions to a zero-mean Gaussian process with convergence rate $Q^{-1/2}$, where 
	\begin{equation}
	Q \leq C\frac{nm_n^2}{n^{3/2}} = o(1) \label{eq-proof-1} 
    \end{equation}
	due to Condition~\ref{assumption-model} with a sufficiently large constant $C > 0$.  
	
As for (ii), we are to show for both processes $\bar{M}^v$ and $\bar{M}^v_Z$ satisfy that for each $\varepsilon, \eta > 0$, there is a $\delta > 0$ such that 
	\begin{equation}\label{eq-Mv-tightness}
		\limsup_n \mathrm{pr}\left\{\sup_{|s-t| < \delta} |\bar{M}^v(s) - \bar{M}^v(t)| > \eta \right\} < \varepsilon,
	\end{equation}
	and
	\[
		\limsup_n \mathrm{pr}\left\{\sup_{|s-t| < \delta} |\bar{M}^v_Z(s) - \bar{M}_Z^v(t)| > \eta \right\} < \varepsilon.
	\]
	The above hold by applying Lemma~\ref{lem-1}.

Therefore, we have established the weak convergence of $(\bar{M}^v, \bar{M}^v_Z)$. By similar arguments in Section~A.2 in \cite{LinEtal2000}, one can obtain that	
	\[
	\frac{v^{\T}U_n(\beta^o)}{\left(v^{\T}\Sigma_{2, n}v\right)^{1/2}} \to \mathcal{N}(0, 1), 
	\]
	in distribution as $n \to \infty$.

\vskip 3mm
\noindent \textbf{Step 3: Controlling residuals.}  
We apply the mean value theorem and obtain, for $r = 1, \ldots, p$,
	\begin{equation}\label{eq-mvt}
	\bigl({U}_n(\hat{\beta})\bigr)_r = \bigl({U}_n(\beta^o)\bigr)_r + \biggl(\frac{\partial \bigl({U}_n(\beta)\bigr)_r}{\partial \beta}\bigg |_{\beta = \beta^{(r)}}\biggr)^{\T}\bigl(\hat{\beta} - \beta^o\bigr),	
	\end{equation}
	where $\beta^{(r)}$ lies between $\hat{\beta}$ and $\beta^o$; we therefore have the vector form of Equation~\eqref{eq-mvt} as follows,
	\begin{align}
	{U}_n(\hat{ {\beta}}) & = {U}_n( {\beta}^o) + \left(\begin{array}{c}
			\Bigl(\partial \bigl({U}_n( {\beta})\bigr)_1/\partial  {\beta}\big|_{ {\beta} =  {\beta}^{(1)}}\Bigr)^{\T} \\
			\vdots \\
			\Bigl(\partial \bigl({ {U}_n}( {\beta})\bigr)_p/\partial  {\beta}\big|_{ {\beta} =  {\beta}^{(p)}}\Bigr)^{\T}
			\end{array}
		\right)\bigl(\hat{ {\beta}} -  {\beta}^o\bigr)	\nonumber	\\
	& = { {U}_n}( {\beta}^o) + {M}( {\beta}^*)\bigl(\hat{ {\beta}} -  {\beta}^o\bigr) = 0,\nonumber 
	\end{align}
	where 
	\[
	{M}( {\beta}^*) = \left(\begin{array}{c}
			\Bigl(\partial \bigl({U}_n( {\beta})\bigr)_1/\partial  {\beta}\big|_{ {\beta} =  {\beta}^{(1)}}\Bigr)^{\T} \\
			\vdots \\
			\Bigl(\partial \bigl({ {U}_n}( {\beta})\bigr)_p/\partial  {\beta}\big|_{ {\beta} =  {\beta}^{(p)}}\Bigr)^{\T}
			\end{array}
		\right).
	\]
	Then for any unit-length $v \in \mathbb{R}^p$, 
	\begin{align*}
	v^{\T}\bigl(\hat{ {\beta}} -  {\beta}^o\bigr) = -v^{\T}\Sigma_{1, n}^{-1}{ {U}_n}( {\beta}^o) + v^{\T}\bigl\{I + \Sigma_{1, n}^{-1}{M}\bigl( {\beta}^*\bigr)\bigr\}\bigl(\hat{ {\beta}} -  {\beta}^o\bigr) = -v^{\T}\Sigma_{1, n}^{-1}{ {U}_n}( {\beta}^o) + o_p(1),
	\end{align*}
	where the last identity follows from the definitions of $\Sigma_{1}$ and $M(\beta^*)$ in addition to applying arguments in Proposition 5 in \cite{YuEtal2018} to the fixed $p$ scenario, and the fact that $\hat{\beta} \to \beta^o$ in probability, as $n \to \infty$. 

Combining the aforementioned three steps, we have
	\[
	\frac{{v}^{\T}(\hat{ {\beta}} -  {\beta}^o)}{\left(v^{\T}\Sigma_{1, n}^{-1}\Sigma_{2, n}\Sigma_{1, n}^{-1}v\right)^{1/2}} \stackrel{\mathcal{D}}{\to} \mathcal{N}(0, 1), \quad \mbox{as } n\to \infty.
	\]	
	
\end{proof}

\bibliographystyle{ims}
\bibliography{recurrent}

\end{document}